\documentclass[11pt]{article}
\input{mysty.sty}
\usepackage{natbib}
\usepackage[showdeletions]{color-edits}

\addauthor{mi}{orange}
\addauthor{jh}{purple}
\newcommand{\cost}{\mathrm{cost}}

\newcommand{\twonorm}[1]{\|#1\|}
\newcommand{\M}{\mathcal{M}}

\title{
Maximum-Cost Strategic Facility Location: \\The Limits of Randomization}
\author{
Jabari Hastings\thanks{Supported by the Simons Foundation Collaboration on the Theory of Algorithmic Fairness and the Simons Foundation Investigators Award 17351.}\\
Stanford University
\and
Misha Ivkov\thanks{This work was partially completed while the author was at Stanford University and supported by the NSF GRFP.}
}

\date{\vspace{15pt}\small{\today}}

\begin{document}

\maketitle

\begin{abstract}
We consider strategic facility location in Euclidean space $\R^d$, where
a mechanism selects a single facility based on the reported locations of
$n$ agents and seeks to minimize the maximum distance from any agent to
the facility.  The optimal approximation ratio of deterministic
strategyproof mechanisms is $2$.  Whether randomization can yield a
universal constant improvement over this factor has remained a major
open question.  We show that for $d \geq 2$, every strategyproof-in-expectation mechanism has
approximation ratio at least
\[
    \alpha(\calM) \ge 2 - e^{-\Theta(\sqrt d)} - O\left(n^{-2/(d-1)}\right).
\]
Therefore, no strategyproof-in-expectation mechanism can guarantee a $(2-\varepsilon)$-approximation uniformly over all $n$ and $d$, for any universal constant $\varepsilon>0$.
\end{abstract}

\clearpage 

\section{Introduction}

Consider the following problem: 

\begin{quote}
  There are $n$ agents in a Euclidean space with distances defined by $\| \cdot \|_2$. Given a profile of their \textit{reported} locations $x_1, x_2, \dots, x_n \in \mathbb R^d$, place a single facility at a location that minimizes the maximum distance between the agents' \emph{actual} locations and the facility.  
\end{quote}
This is a classic instance of the facility location problem, which is widely studied in the mechanism design literature; see \citep{Chan:2021aa} for a survey.
An intriguing aspect of this problem is its strategic dimension.
If the reported locations coincided with the agents' actual locations, then the optimal facility location is not hard to find; it is simply the center of the minimum enclosing ball. However, if an agent knew that the facility location is computed in this way, then she could misreport her location so that the facility is placed at her actual location. The goal of the \emph{strategic} facility location problem is thus to select a facility location that minimizes the maximum distance, subject to a \emph{strategyproofness} constraint: each agent must be incentivized to report her actual location. 

The seminal work of \citet{Procaccia:2009aa,Procaccia:2013aa} demonstrated that strategyproof mechanisms for facility location exist and do not require monetary payments to incentivize agents. 
For facility location in $\R$, they proved that no deterministic strategyproof
mechanism can achieve an approximation ratio better than $2$, whereas
the optimal strategyproof-in-expectation mechanism achieves a ratio of $3/2$. 
Besides establishing the paradigm of mechanism design without money, their results also illustrated that randomization can lead to significantly improved approximations.

For more general network metrics, \citet{Alon:2010aa} obtained a tight
$3/2$-approximation for randomized strategyproof mechanisms on cycles
and showed that no randomized strategyproof mechanism can achieve better
than a $2-o(1)$ approximation on trees under the maximum-cost objective.
In Euclidean spaces, however, the disparity between deterministic and
randomized mechanisms has been less apparent. In $\R^2$, several
deterministic mechanisms yield $2$-approximations, including
dictatorships \citep{Alon:2010aa} and the coordinatewise median
\citep{Goel:2023aa,Chan:2026aa,Hastings:2026aa}; dictatorships remain
$2$-approximations in higher dimensions.  
The only known randomized mechanism that improves upon the factor of $2$
for maximum cost is the \emph{Centroid Mechanism} of
\citet{Tang:2020aa}, which mixes the mean reported location with random
dictatorship and achieves a $(2-\frac{1}{n})$-approximation.  
Whether randomization can yield a constant improvement over the factor
of $2$ in Euclidean spaces has therefore remained a major open question.

Recent work has begun to address this question through lower bounds.
\citet{Balkanski:2024aa} first showed that every
strategyproof-in-expectation mechanism has approximation ratio at least
$\frac{\sqrt{5}}{2}\approx1.118$ in $\R^d$ for $d\geq2$.\footnote{While \cite{Balkanski:2024aa} show this result for $\mathbb R^2$, their argument extends directly to $\mathbb R^d$ for every $d \geq 2$.} Very recently, \citet{Gomes:2026aa} improved the lower bound to $1 + \sqrt {\frac{d}{2(d + 1)} }$, which tends to $1 + \frac{1}{\sqrt{2}} \approx 1.707$ as $d \to \infty$.
Our main result is the following.
\begin{theorem}
    \torestate{
        \label{thm:main-theorem}
        Every strategyproof-in-expectation mechanism $\calM: (\R^d)^n\rightarrow \Delta(\R^d)$ for $d\ge 2$ has approximation ratio
    \[\alpha(\calM) \ge 2 - e^{-\Theta(\sqrt d)} - O\left(n^{-2/(d-1)}\right).\]
    }
\end{theorem}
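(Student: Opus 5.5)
We plan to exhibit a family of hard profiles, built from a nearly-orthogonal code on a high-dimensional sphere, and to combine strategyproofness-in-expectation with symmetrization. The error term $e^{-\Theta(\sqrt d)}$ indicates a concentration-of-measure phenomenon on $S^{d-1}$. The term $n^{-2/(d-1)}$ indicates a packing/covering argument: we place $n$ agents on a spherical cap or sphere, where the covering radius of $n$ points on $S^{d-1}$ scales like $n^{-1/(d-1)}$, and its square enters through a cosine expansion.

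\textbf{Step 1 (symmetrization).} We first reduce to mechanisms that are equivariant under rotations fixing the construction, and anonymous. Averaging a strategyproof-in-expectation mechanism over the symmetry group (with Haar measure) preserves strategyproofness in expectation, by convexity of $\|\cdot\|$. It does not worsen the approximation ratio on our orbit-closed family. This lets us assume that the output distribution on a symmetric profile has its mean at the center of symmetry.

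\textbf{Step 2 (base profile).} Take $n$ agents placed on the unit sphere, nearly uniformly (an $\varepsilon$-net with $\varepsilon\approx n^{-1/(d-1)}$). The optimum is $1$, and by symmetry the facility distribution is spherically symmetric about the origin. Now move one agent $i$ from $u$ to the antipodal-ish far point $-Ru$, or alternatively collapse agents into a cap. The strategyproofness constraint for agent $i$, applied repeatedly along a chain of deviations, bounds how much the mechanism can shift mass toward $u$. The key inequality is: if $P$ is the output on the base profile and $Q$ the output after agent $i$ reports $x'$, then
\[
\mathbb E_{Q}\|f-x_i\| \ge \mathbb E_{P}\|f-x_i\|, \qquad \mathbb E_{P}\|f-x'\|\ge \mathbb E_Q\|f-x'\|.
\]

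\textbf{Step 3 (main obstacle).} The crux is to design a profile where the optimum stays about $1$ but every facility location within distance $(2-\delta)$ of all agents must lie in a region that the sequence of strategyproofness inequalities forbids. The intended mechanism is a hemisphere: agents fill a cap of angular radius near $\pi/2$, so the optimal center sits at the cap's pole. By rotation-equivariance, however, the mechanism's mass must look like the base symmetric profile, which is centered at the origin, far from the remote agents. Concentration of measure then shows that a point at bounded distance from the origin has distance at least $\sqrt{2}-o(1)$ to typical sphere points. We then iterate through a chain of $\Theta(\sqrt d)$ intermediate profiles, each moving agents by a small rotation. Along this chain the expected distances telescope, pushing the cost toward $2$, with geometric loss $e^{-\Theta(\sqrt d)}$.

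I expect the hardest part to be making this telescoping quantitative. The idea is to show that the expected distance to the moved agent cannot decrease faster than the distance between reports. This is the one-dimensional $3/2$ barrier phenomenon, and it must be amplified here via high-dimensional orthogonality, since nearly all directions are orthogonal to any candidate facility offset. Finally, we account for the discretization error $O(n^{-2/(d-1)})$ from the nonuniform net.
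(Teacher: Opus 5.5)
Your proposal has a genuine gap. It never identifies a quantity that strategyproofness forces to grow, and the step meant to play that role is invalid.

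\begin{itemize}
\item \textbf{Equivariance does not transfer outputs between profiles.} In Step~3 you argue that, by rotation-equivariance, the output on the cap profile ``must look like'' the output on the symmetric base profile. After symmetrization, equivariance only relates outputs on profiles that are images of one another under the symmetry group. The cap profile is not a rotation of the spherical one. The only link between them is strategyproofness, which gives one-sided inequalities (the expected distance to a deviating agent's \emph{true} location cannot decrease), not preservation of the output distribution.
\item \textbf{Symmetry gives no lower bound on $\mathbb{E}\|F\|$.} On the symmetric base profile, nothing prevents the output from being the origin, which has ratio about $1$.
\item \textbf{Concentration is the wrong tool.} It controls only \emph{typical} distances to sphere points (giving $\sqrt{2}$), while the max-cost objective is governed by the farthest agent.
\item \textbf{The telescoping is unsupported.} You claim the chain of $\Theta(\sqrt d)$ profiles pushes the cost toward $2$, but give no inequality showing that anything increases along it.
\end{itemize}
As written, nothing in the argument forces the ratio above $1$.

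The paper supplies the missing ingredients as follows.
\begin{enumerate}
\item[(i)] \emph{Group deviation.} Track $\mathcal{O}(p;x)=\mathbb{E}_{F\sim\mathcal{M}(x)}\|F-p\|$ for a block of agents co-located at $p$. Chaining your single-agent inequality over the block shows that relocating the entire block arbitrarily cannot decrease $\mathcal{O}(p;\cdot)$.
\item[(ii)] \emph{Adaptive construction.} Start with half the agents at each of $\pm u_1$, so one side has $\mathcal{O}\ge 1$. Then repeatedly split the current block onto the $2s$ points $\frac{\lambda}{a^\ast}p\pm\frac{\sqrt{1-\lambda^2}}{(a^\ast)^i}u_j$, using a fresh set of $s$ orthonormal directions each time, with $\lambda=0.9$ and $a^\ast=\lambda+2/s$. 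At each stage, move to the new block with the largest $\mathcal{O}$. A pointwise lower bound on the average of $\|z-\cdot\|$ over $\pm$ orthonormal directions, combined with Jensen and (i), gives $\mathcal{O}(p_i;x^{(i)})\ge\|p_i\|=(a^\ast)^{-i}$ at every stage. A direct computation shows every agent stays within $B(p_T,R)$ with $R=\|p_T\|(1+e^{-\Theta(T)})$.
\item[(iii)] \emph{Final spreading.} Spread the final block over an $R\sqrt{\varepsilon}$-net of $\partial B(p_T,R)$. By (i), $\mathbb{E}\|F-p_T\|$ stays at least $\|p_T\|$. The farthest net point is at distance at least $\|F-p_T\|+R(1-\varepsilon)$, and the optimum is at most $R$. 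This gives ratio $2-e^{-\Theta(T)}-\varepsilon$.
\end{enumerate}
The $\sqrt d$ comes from needing $Ts\le d-1$ fresh directions with $T\le s/25$, not from concentration of measure. Your reading of $n^{-2/(d-1)}$ as a squared net radius is correct. Your symmetrization step is harmless but substitutes for none of (i)--(iii).
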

In particular, taking \(d=\Theta(\log^{2/3} n)\) and balancing the error terms gives
\[
\alpha(\mathcal M)\ge
2-\exp\!\left(-\Omega(\log^{1/3} n)\right).
\]
Therefore, no strategyproof-in-expectation mechanism can guarantee a $(2-\varepsilon)$-approximation uniformly over all $n$ and $d$, for any universal constant $\varepsilon>0$.

\subsection{Related Work}
The facility location problem has been studied from the perspectives of location science,
social choice, and mechanism design; see \citet{Laporte:2020aa},
\citet{Barbera:2011aa}, and \citet{Chan:2021aa}, respectively, for
overviews. Below, we describe the work most closely related to ours.

\paragraph{Strategyproof Mechanisms.}
Strategic facility location can also be viewed as a spatial voting problem
and is closely related to the literature on strategyproof voting over
single-peaked and related preference domains
\citep{Moulin:1980aa,Barbera:1993aa,Ching:1997aa,Schummer:2002aa}.
Most directly relevant, \citet{Kim:1984aa} characterize continuous,
anonymous, strategyproof rules under Euclidean preferences in $\R^2$ as
coordinatewise generalized median mechanisms.  Two companion papers
extend this picture by relating continuity to convexity of the range and
by studying Pareto-optimal, anonymous mechanisms under general strictly
convex norms \citep{Peters:1993aa,Peters:1993ab}.
Related characterizations under other assumptions on the range and
preference domain appear in
\citet{Border:1983aa,Barbera:1998aa,Peremans:1997aa}.
These results identify important classes of deterministic mechanisms, but
do not yield a characterization at the level of generality considered in
this work.

\paragraph{Randomized Mechanisms for Other Objectives.}
Recent work has studied randomized strategyproof mechanisms under other
social-cost objectives.
\citet{Barak:2026aa} studies strategic facility location with utilitarian social cost. For facility location in $\mathbb R^2$, the author introduces the \emph{Randomly Rotated Coordinatewise Median} (RR-CWM), a randomized \emph{universally strategyproof} mechanism with an approximation ratio of $\frac{4}{\pi} \approx 1.273$.
This provides a strict improvement over the optimal deterministic strategyproof mechanism, which is a $\sqrt{2}$-approximation. Establishing a strict separation in higher dimensions is an open problem. \citet{Barak:2026aa} also establishes lower bounds for a class of \emph{Generalized Random Dictators}. However, there are no known general randomized lower bounds for the utilitarian setting.

Independently, \citet{Chan:2026aa} analyze the same rotated mechanism
under $\ell_p$-norm social costs in $\R^2$.  They show that RR-CWM has a strictly better approximation ratio than the coordinatewise median for $1\leq p<2$, while the
\emph{Centroid Mechanism} improves upon RR-CWM for finite
$p\gtrsim1.6$.  General randomized lower bounds remain unknown for
finite $p$.

\paragraph{Multiple Facilities.}
Beyond the single facility setting, \citet{Procaccia:2013aa}
also studied the two-facility variant in $\mathbb R$.
For deterministic mechanisms, they established an $\Omega(n)$-approximation ratio for the utilitarian social cost and a $2$-approximation ratio for the maximum cost. For strategyproof-in-expectation mechanisms for the maximum social cost, they established a lower bound of $3/2$ and an upper bound of $5/3$.

For the two-facility problem in \emph{general metric spaces} with utilitarian social cost, \citet{Lu:2010aa} proved that every deterministic strategyproof mechanism has approximation ratio $\Omega(n)$ and gave a strategyproof-in-expectation mechanism with approximation ratio $4$.
Very recently for the same setting, \citet{Ma:2026aa} improved the upper bound to $11/3 \approx 3.667$. They also strengthened the lower bound of \cite{Lu:2009aa} from $1.045$ to $(1 + \sqrt{2})/2 \approx 1.207$. Their lower bound relies on analyzing the effect of manipulating a single agent within a collocated block. Interestingly, our main result also works with collocated blocks, but in a different way: we consider coordinated perturbations along orthogonal directions in $\mathbb R^d$.

\section{Preliminaries}
In the strategic facility location problem, the true locations of $n$ agents are denoted by the location profile $x = (x_1, \dots, x_n ) \in (\R^d)^n$. The space $\R^d$ is equipped with the $\ell_2$-norm, denoted by $\twonorm{\cdot}$. The individual \emph{cost} of a facility location $f \in \R^d$ for an agent located at $x_i$ is the distance $\twonorm{f - x_i}$. 

\paragraph{Social Cost.}
The \emph{egalitarian social cost} of a facility $f$ aggregates individual costs using the $\ell_\infty$-norm,
\begin{equation}
    \cost(f, x) =  \max_{i \in [n]} \twonorm{f - x_i}.
\end{equation}
A \textit{randomized} mechanism $\M$ takes the reported location profile $x$ as input and produces a probability distribution over $\R^d$. We slightly abuse notation to define the expected social cost of the mechanism on profile $x$ as
\begin{equation}
    \cost(\M, x) = \E_{F \sim \M(x)}[\cost(F, x)],
\end{equation}
where the facility $F$ is drawn from the distribution $\M(x)$.

\paragraph{Approximation Ratio.}
The \textit{approximation ratio} of a mechanism $\M$ is the worst-case ratio of the mechanism's expected social cost to the optimal social cost over all location profiles $x \in (\R^d)^n$,
\begin{equation}
    \alpha(\M) = \sup_{x \in (\R^{d})^n} \frac{\cost(\M, x)}{\min_{f \in \R^d} \cost(f, x)}.
\end{equation}

Note that when all agents are collocated, $\cost(\calM, x)$ must be $0$ to avoid an infinite approximation ratio; we will disregard this exceptional case from the definition of $\alpha(\calM)$.

\paragraph{Strategyproofness.}
A randomized mechanism $\M$ is \emph{strategyproof-in-expectation} if no agent can decrease their expected cost by reporting a false location. Formally, for any agent $i \in [n]$ and any two location profiles $x, x' \in (\R^d)^n$ that differ only in the $i$-th coordinate (i.e., $x_j = x_j'$ for all $j \neq i$),  
\begin{equation*}
    \EE_{F \sim \M(x)}[\twonorm{F - x_i}] \le \EE_{F \sim \M(x')}[\twonorm{F - x_i}].
\end{equation*}

A useful consequence of strategyproofness-in-expectation is the following surprising fact regarding \emph{group deviation} noted in the recent work of \citet{Gomes:2026aa}.

\begin{fact}[Lemma 3.3 of~\cite{Gomes:2026aa}]
\label{fact:group-deviation}
    Suppose that $x$ is a profile where $k$ agents are co-located at a point $p$.
    Then, for every location profile $x'$ which only differs from $x$ on these $k$ agents,
    \[\EE_{F\sim \calM(x')}[\|F - p\|] \ge \EE_{F\sim \calM(x)}[\|F - p\|].\]
\end{fact}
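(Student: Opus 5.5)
The plan is a hybrid (telescoping) argument that replaces the $k$ co-located agents by their new reports one at a time, using ordinary single-agent strategyproofness-in-expectation at each step. The key point is that every co-located agent has the same true location $p$. So the quantity $\EE\|F-p\|$ is simultaneously the expected cost of each of these agents, and it stays so at every intermediate profile.

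Concretely, label the $k$ co-located agents $i_1,\dots,i_k$, so $x_{i_1}=\dots=x_{i_k}=p$, and $x'$ agrees with $x$ outside these indices. For $j=0,1,\dots,k$ I define the hybrid profile $x^{(j)}$ by
\[
x^{(j)}_{i_\ell} = x'_{i_\ell} \text{ for } \ell \le j, \qquad x^{(j)}_{i_\ell} = p \text{ for } \ell > j,
\]
with all other agents as in $x$ (equivalently, as in $x'$). Then $x^{(0)}=x$ and $x^{(k)}=x'$.

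Next, fix $j\in\{1,\dots,k\}$ and compare $x^{(j-1)}$ with $x^{(j)}$. These two profiles differ only in the coordinate of agent $i_j$. In $x^{(j-1)}$ agent $i_j$ is located at $p$, and in $x^{(j)}$ she reports $x'_{i_j}$ instead. Applying the definition of strategyproofness-in-expectation with true profile $x^{(j-1)}$ and agent $i_j$, whose true location is $x^{(j-1)}_{i_j}=p$, gives
\[
\EE_{F\sim\calM(x^{(j-1)})}[\|F-p\|] \le \EE_{F\sim\calM(x^{(j)})}[\|F-p\|].
\]
Chaining these $k$ inequalities for $j=1,\dots,k$ yields $\EE_{F\sim\calM(x)}[\|F-p\|]\le \EE_{F\sim\calM(x')}[\|F-p\|]$, which is the claim.

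There is no real obstacle; the only thing to check carefully is the one that makes the argument work. At step $j$, the deviating agent $i_j$ must still be at $p$ in the "truthful" profile $x^{(j-1)}$. This holds because agents are switched in order and $i_j$ has not yet been moved. Her cost function is therefore exactly $\|\cdot - p\|$, the same function for every step, which is why the inequalities telescope. Note that the agents already moved (those $i_\ell$ with $\ell<j$) are not at $p$ in $x^{(j-1)}$, but this is irrelevant: strategyproofness places no restriction on the reports of the other agents.
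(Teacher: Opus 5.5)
Your proof is correct and complete. At each step the switched agent $i_j$ is still truthfully at $p$ in $x^{(j-1)}$, so single-agent strategyproofness-in-expectation applies with the same cost function $\|\cdot - p\|$ every time, and the $k$ inequalities telescope. The paper gives no proof of this fact and simply imports it as Lemma~3.3 of \citet{Gomes:2026aa}; your one-agent-at-a-time hybrid argument is a self-contained justification of it.
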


\section{Proof Overview}

Before giving the main proof, it is insightful to instead aim to prove that any mechanism $\cal M$ that is  strategyproof-in-expectation must have approximation ratio $\alpha(\calM) \ge 2 - O\left(\frac 1{\sqrt d}\right) - o_n(1)$.
For this discussion, we will assume that $n$ is sufficiently large, even compared to $d$.

The key insight we use is that the construction of an adversarial agent profile for the mechanism $\cal M$ need not be one-shot.
In particular, we can adaptively construct the profile in \emph{stages}, choosing configurations which make $\calM$ perform poorly. For each such configuration, the mechanism must ensure that no agent benefits from misreporting at the expense of minimizing the maximum cost.

Toward this adaptive construction of an agent profile, it is useful to consider the function $\calO_{\calM}(p;x) \coloneqq \E_{F\sim \calM(x)}[\|F - p\|]$.
Note that if there is at least one agent at $p$ in $x$, then $\calO_{\calM}(p;x)$ is a lower bound on $\cost(\calM, x)$.
Thus, for any profile $x$, it follows that
\begin{align}\alpha(\calM) \ge \frac{\max_{p\in x}\calO_{\calM}(p;x)}{\min_{f\in \R^d}\cost(f, x)}\label{eq:lossy-bound}.\end{align}
The construction may be encoded by the following algorithm; $T$ is the aforementioned number of stages and $s$ is a branching parameter controlling how much $\calO_{\calM}$ increases at each stage.

\begin{algorithmSELF}[Construction of agent profile]
\label{alg:agent-profile-construction}
{\textbf{Input: }} Oracle $\calO_{\calM}$; branching parameter $s$;  stage parameter $T \geq 2$ with $d\ge 1 + Ts$ and $s \ge T$.

\noindent {\textbf{Operation: }}
\begin{itemize}[noitemsep]
    \item Choose an orthonormal basis $u_1, u_2, \ldots, u_d$ for $\R^d$ and partition it into chunks 
    \[[\calB_0,\calB_1,\ldots]= [[u_1], [u_2,\ldots,u_{s+1}], [u_{s+2},\ldots,u_{2s+1}],\ldots]\]
    \item Initialize the profile $x^{(0)}$ to have $\frac n2$ agents at $u_1$ and $\frac n2$ agents at $-u_1$, and suppose WLOG that $\calO_{\calM}(u_1; x^{(0)}) \ge \calO_{\calM}(-u_1; x^{(0)})$.
    \item Initialize the current position $p = u_1$.
    \item {\textbf{For}} $1\le i\le T$:
    \begin{itemize}
        \item [$\spadesuit$] {\color{blue} \textbf{Split} the agents at $p$ in profile $x^{(i - 1)}$ into $2s$ blocks of equal size, and move them to $\{p \pm u_j: u_j \in \calB_i\}$ to form profile $x^{(i)}$.}
        \item Among the $2s$  newly created blocks, choose the point $p'$  maximizing $\calO_{\calM}(p'; x^{(i)})$, and set $p = p'$.
    \end{itemize}
    \item {\textbf{Return}} $x^{(T)}$.
    
\end{itemize}

\noindent {\textbf{Output: }} An agent profile $x^{(T)} \in (\R^d)^n$.
\end{algorithmSELF}

To understand ~\pref{alg:agent-profile-construction}, it is helpful to consider the toy example $d = 3, s = 1, T = 2$ (even though it lies outside the $s \geq T$ regime); see~\pref{fig:evolution-of-agents}.
In this example, we take the orthonormal basis $e_1, e_2, e_3$ (in other words, the coordinate axes), and thus $[\calB_0, \calB_1, \calB_2] = [[e_1], [e_2], [e_3]]$.

\tdplotsetmaincoords{70}{120}

\begin{figure}[ht]

\centering
\begin{subfigure}[b]{0.32\textwidth}
    \centering
    \begin{tikzpicture}[tdplot_main_coords, scale=1.7]

        \node[
            draw,
            line width=0.2pt,
            minimum width=0.98\linewidth,
            minimum height=4.7cm,
            inner sep=0pt
        ] at (0,0,0) {};

        \draw[->] (-1.5,0,0) -- (1.5,0,0) node[anchor=north east] {$x$};
        \draw[->] (0,-1.2,0) -- (0,1.5,0) node[anchor=north west] {$y$};
        \draw[->] (0,0,-1.2) -- (0,0,1.2) node[anchor=south] {$z$};
    
        \fill (1,0,0) circle (1.2pt);
        \fill (-1,0,0) circle (1.2pt);
    
        \node[above] at (1,0,0) {\footnotesize $e_1$};
        \node[above] at (-1,0,0) {\footnotesize $-e_1$};
    \end{tikzpicture}
    \caption{$x^{(0)}$: agents are split between $e_1$ and $-e_1$.}
\end{subfigure}%
\hfill
\begin{subfigure}[b]{0.32\textwidth}
    \centering
    \begin{tikzpicture}[tdplot_main_coords, scale=1.7]

        \node[
            draw,
            line width=0.2pt,
            minimum width=0.98\linewidth,
            minimum height=4.7cm,
            inner sep=0pt
        ] at (0,0,0) {};

        \draw[->] (-1.5,0,0) -- (1.5,0,0) node[anchor=north east] {$x$};
        \draw[->] (0,-1.2,0) -- (0,1.5,0) node[anchor=north west] {$y$};
        \draw[->] (0,0,-1.2) -- (0,0,1.2) node[anchor=south] {$z$};
    
        \draw[->, thick, blue] (1,0,0) -- (1,0.95,0);
        \draw[->, thick, blue] (1,0,0) -- (1,-0.95,0);
    
        \fill[gray] (1,0,0) circle (1.2pt);
        \fill (-1,0,0) circle (1.2pt);
        \fill (1,1,0) circle (1.2pt);
        \fill (1,-1,0) circle (1.2pt);
    
        \node[above] at (-1,0,0) {\footnotesize $-e_1$};
        \node[right] at (1,1,0) {\footnotesize $e_1+e_2$};
        \node[above right] at (1,-1,0) {\footnotesize $e_1-e_2$};
    \end{tikzpicture}
    \caption{$x^{(1)}$: agents at $p = e_1$ are split between $e_1\pm e_2$.}
\end{subfigure}%
\hfill
\begin{subfigure}[b]{0.32\textwidth}
    \centering
    \begin{tikzpicture}[tdplot_main_coords, scale=1.7]

        \node[
            draw,
            line width=0.2pt,
            minimum width=0.98\linewidth,
            minimum height=4.7cm,
            inner sep=0pt
        ] at (0,0,0) {};

        \draw[->] (-1.5,0,0) -- (1.5,0,0) node[anchor=north east] {$x$};
        \draw[->] (0,-1.2,0) -- (0,1.5,0) node[anchor=north west] {$y$};
        \draw[->] (0,0,-1.2) -- (0,0,1.2) node[anchor=south] {$z$};
    
        \draw[->, thick, gray] (1,0,0) -- (1,0.95,0);
        \draw[->, thick, gray] (1,0,0) -- (1,-0.95,0);
        \draw[->, thick, blue] (1,-1,0) -- (1,-1,0.95);
        \draw[->, thick, blue] (1,-1,0) -- (1,-1,-0.95);
    
        \fill[gray] (1,0,0) circle (1.2pt);
        \fill (-1,0,0) circle (1.2pt);
        \fill (1,1,0) circle (1.2pt);
        \fill[gray] (1,-1,0) circle (1.2pt);
        \fill (1,-1,1) circle (1.2pt);
        \fill (1,-1,-1) circle (1.2pt);
    
        \node[above] at (-1,0,0) {\footnotesize $-e_1$};
        \node[right] at (1,1,0) {\footnotesize $e_1+e_2$};
        \node[right] at (1,-1,1) {\footnotesize $e_1-e_2+e_3$};
        \node[right] at (1,-1,-1) {\footnotesize $e_1 -e_2 - e_3$};
    \end{tikzpicture}
    \caption{$x^{(2)}$: agents at $p = e_1 - e_2$ are split between $e_1-e_2\pm e_3$.}
\end{subfigure}

\caption{Illustration of the splitting operation for $d=3, s = 1, T = 2$. The final value of $p$ is either $e_1 - e_2 + e_3$ or $e_1 - e_2 - e_3$, depending on the last selection.
\label{fig:evolution-of-agents}
}
\end{figure}

By~\pref{eq:lossy-bound}, it suffices to understand $\calO_{\calM}(p;x^{(T)})$ and $\min_{f\in \R^d} \cost(f, x^{(T)})$. Below we will sketch arguments bounding these.

\begin{claim}
    $\calO_{\calM}(p;x^{(T)}) \ge \sqrt{1 + T(1 - \frac 2s)}$.
\end{claim}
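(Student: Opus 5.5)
The plan is to prove, by induction on $i$, the stronger statement that the point $p_i$ selected at stage $i$ satisfies $\calO_{\calM}(p_i;x^{(i)})^2 \ge 1 + i(1-\frac 2s)$. The claim is the case $i=T$.

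\textbf{Base case.} For $i=0$, the triangle inequality gives $\|F-u_1\|+\|F+u_1\| \ge 2$ pointwise. Taking expectations, $\calO_{\calM}(u_1;x^{(0)}) + \calO_{\calM}(-u_1;x^{(0)}) \ge 2$. By the WLOG choice in \pref{alg:agent-profile-construction}, $\calO_{\calM}(u_1;x^{(0)})\ge 1$.

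\textbf{Inductive step.} Fix stage $i$ and write $p=p_{i-1}$, $r=\calO_{\calM}(p;x^{(i-1)})$. The profile $x^{(i)}$ differs from $x^{(i-1)}$ only on the block of agents co-located at $p$. Fact~\ref{fact:group-deviation} therefore gives $\E_{F\sim\calM(x^{(i)})}\|F-p\| \ge r$. Set $G=F-p$ with $F\sim \calM(x^{(i)})$. The new point $p'$ maximizes $\calO_{\calM}$ over the $2s$ new blocks, so it is at least the average over them:
\[
\calO_{\calM}(p';x^{(i)}) \ge \E\Big[\frac{1}{2s}\sum_{u_j\in\calB_i}\big(\|G-u_j\|+\|G+u_j\|\big)\Big].
\]
It then suffices to prove a pointwise inequality: for every $G\in\R^d$ and orthonormal $u_1,\dots,u_s$,
\[
\frac{1}{2s}\sum_{j}\big(\|G-u_j\|+\|G+u_j\|\big) \ge \sqrt{\|G\|^2 + 1 - \tfrac 2s}.
\]
Given this, the map $t\mapsto\sqrt{t^2+c}$ is convex and increasing on $t\ge 0$. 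Jensen's inequality then yields $\calO_{\calM}(p';x^{(i)}) \ge \sqrt{(\E\|G\|)^2+1-\frac 2s} \ge \sqrt{r^2+1-\frac2s}$, which closes the induction.

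\textbf{Pointwise inequality.} Let $a=\|G\|^2+1$ and $b_j=2\langle G,u_j\rangle$, so that $\|G\pm u_j\|=\sqrt{a\pm b_j}$.
\begin{itemize}
\item Squaring, $(\sqrt{a+b}+\sqrt{a-b})^2 = 2a+2\sqrt{a^2-b^2} \ge 4a - 2b^2/a$. Hence each pair contributes at least $2\sqrt{a - b_j^2/(2a)}$.
\item By orthonormality, $\sum_j b_j^2 \le 4\|G\|^2$. So the average of $c_j := b_j^2/(2a)$ is at most $2\|G\|^2/(sa) < 2/s$.
\item The main obstacle is the last concavity step. I need the average of $\sqrt{a-c_j}$ to be at least $\sqrt{a-2/s}$, but concavity of the square root points the wrong way. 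My first attempt would be the bound $\sqrt{a-c}\ge\sqrt a - \frac{c}{\sqrt a}$, valid when $c\le a$. This gives an average of at least $\sqrt a - \frac{2\|G\|^2}{s a^{3/2}}$, and I would then compare this directly with $\sqrt{a-2/s}$.
\end{itemize}
A sanity check supports the inequality with this constant. At $G=0$ the average equals $1$. For $G=Ru_1$ with $R$ large, the average is about $R+\frac{1}{2R}(1-\frac1s)$, which matches $\sqrt{R^2+1-\frac1s}$ and so has slack relative to $1-\frac 2s$. If the direct comparison fails in an intermediate range of $\|G\|$, the fallback is to keep the $\|G\|$-dependence exact. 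I would minimize over $G$ the gap between $\frac{1}{2s}\sum_j(\|G-u_j\|+\|G+u_j\|)^2$-type quantities and $\|G\|^2+1-\frac 2s$, reducing to the worst case where $G$ lies in $\mathrm{span}(\calB_i)$ along a single $u_j$. That is a one-variable inequality to verify.
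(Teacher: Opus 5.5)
\textbf{Verdict: there is a gap, in your proof of the pointwise inequality; the rest matches the paper.} Your induction skeleton is exactly the paper's: the triangle-inequality base case, group deviation, max $\ge$ average over the $2s$ new blocks, and Jensen for the convex increasing map $t\mapsto\sqrt{t^2+c}$. The paper, however, only \emph{asserts} the pointwise inequality $\frac1{2s}\sum_j(\|G-u_j\|+\|G+u_j\|)\ge\sqrt{\|G\|^2+1-\frac2s}$. So the only place you go beyond it is your proof of that inequality, and as written that proof does not close.

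\textbf{Where your linearization fails.} Put $g=\|G\|^2$ and $a=g+1$. Your linearized bound is $\sqrt a-\frac{2g}{sa^{3/2}}$, while $\sqrt a-\sqrt{a-2/s}=\frac{1}{s\sqrt a}+O(s^{-2})$. As $s\to\infty$ the comparison therefore reduces to $2g\le a$, i.e.\ $\|G\|\le 1$. It fails for every fixed $\|G\|>1$ once $s$ is large, not just in an intermediate range. That is precisely the regime the induction lives in, since $\E\|G\|\ge r\ge 1$, and in any case the inequality is needed for every $G$. The culprit is $\sqrt{a-c}\ge\sqrt a-c/\sqrt a$. It doubles the first-order loss, and for large $\|G\|$ with spread-out projections your pair bound has already used almost all the available slack.

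\textbf{How to finish via your fallback.} The reduction to ``$G$ along a single $u_j$'' is the right idea, but you assert it without reason. The reason is the concavity you set aside: it defeats Jensen, but it is exactly what an extreme-point argument needs.
\begin{itemize}
\item Put $t_j=\langle G,u_j\rangle^2$. Your pair bound gives an average of at least $\frac1s\sum_j\sqrt{a-2t_j/a}$. This is concave in $(t_j)$ on the polytope $\{t_j\ge 0,\ \sum_j t_j\le g\}$.
\item That polytope contains every admissible $(t_j)$ by Bessel. All radicands are positive on it, since $a^2-2g=g^2+1$.
\item The minimum is therefore at a vertex, the worst being $t_1=g$ and $t_j=0$ for $j\neq 1$. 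The claim reduces to $\sqrt a-\frac1s\bigl(\sqrt a-\sqrt{a-2g/a}\bigr)\ge\sqrt{a-2/s}$.
\item Since $\sqrt a-\sqrt{a-2/s}\ge\frac{1}{s\sqrt a}$ (using $a\ge 1$, $s\ge 2$), it suffices that $\sqrt a-\sqrt{a-2g/a}\le\frac1{\sqrt a}$.
\item Multiplying by $\sqrt a$, this is $a-1\le\sqrt{a^2-2g}$, i.e.\ $g\le\sqrt{g^2+1}$, which always holds.
\end{itemize}

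Your sharper pair bound $\sqrt{a-2t_j/a}$ is essential here. The cruder bound $\sqrt{a-t_j}$, the analogue of the one in the paper's appendix, fails this vertex check once $\|G\|^2$ exceeds about $1.6$. The appendix's linearization route gives only the multiplicative form $\|G\|^2(1-\frac2s)+1$, not the additive one. With the vertex argument in place, your proof is complete, and it supplies an argument for the pointwise inequality that the paper leaves unproved.
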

\begin{proof}
The proof is by induction on the stage $i$.
In particular, let the sequence of points $p$ through the $T$ stages be $p_0, p_1, \ldots, p_T$.
By construction, $\calO_{\calM}(p_0;x^{(0)}) \ge 1$.

For the induction step, consider the symmetrization of $\mathcal B_{i+1}$ given by $$\overline {\calB}_{i+1} = \calB_{i+1} \cup \{-u_j : u_j\in \calB_{i+1}\}.$$

We claim the following pointwise inequality for all $z\in \R^d$ assuming $s \ge 2$:
\begin{equation}\E_{v\in \overline \calB_{i+1}}[\|z - v\|_2] \ge \sqrt{\|z\|^2 + \left(1 - \frac 2s\right)}.\footnote{Note that this is almost tight. 
In particular, by Jensen's, we have that $\E[\|z - v\|_2]\le \sqrt{\E[\|z - v\|_2^2]} = \sqrt{\|z\|_2^2 + 1}$, since $\overline \calB_{i+1}$ is a symmetric set of unit vectors.
As $s\rightarrow\infty$, the pointwise inequality converges to this bound.}
\label{eq:pointwise-increase}\end{equation}
This allows us to iterate:
    \begin{align*}\max_{v\in \overline \calB_{i+1}} \calO_{\calM}(p_i + v; x^{(i+1)}) &\ge \EE_{v\in \overline \calB_{i+1}} [\calO_{\calM}(p_i + v; x^{(i+1)})]\\
    &= \EE_{F\sim \calM(x^{(i+1)})}\left[\EE_{v\in \overline \calB_{i+1}} [\|F - (p_i + v)\| ] \right]\\
    &\overset{\pref{eq:pointwise-increase}}{\ge}\EE_{F\sim \calM(x^{(i+1)})}\left[\sqrt{\|F - p_i\|_2^2 + \left(1 - \frac 2s\right)}\right]\\
    &\!\!\!\!\!\!\!\overset{\text{(Jensen's)}}{\ge} \sqrt{\EE_{F\sim \calM(x^{(i+1)})}[\|F - p_i\|_2]^2 + \left(1 - \frac 2s\right)}\\
    &\!\!\!\!\!\overset{\text{(\pref{fact:group-deviation})}}{\ge} \sqrt{\EE_{F\sim \calM(x^{(i)})}[\|F - p_i\|_2]^2 + \left(1 - \frac 2s\right)}\\
    &\ge \sqrt{1 + (i+1)\left(1 - \frac 2s\right)}.
    \end{align*}
and thus the induction is complete.
\end{proof}

\begin{claim}
$\min_{f\in \R^d} \cost(f, x^{(T)}) \le \frac 12\sqrt{T + O(1)}$.
\end{claim}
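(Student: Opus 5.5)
The plan is to exhibit one explicit center $f$ and bound its distance to every occupied location of $x^{(T)}$. Let $p_0 = u_1, p_1, \ldots, p_T$ be the sequence of selected points. Write $p_i = u_1 + w_1 + \cdots + w_i$, where $w_i \in \{\pm u_j : u_j \in \calB_i\}$ is the direction chosen at stage $i$. The $w_i$ together with $u_1$ are mutually orthogonal unit vectors, because they come from disjoint chunks of an orthonormal basis.

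\textbf{Step 1: list the occupied locations.} Every agent location in $x^{(T)}$ is one of two kinds:
\begin{itemize}
\item the point $-u_1$, which is never split;
\item a point $q = p_{i-1} + v$ with $1 \le i \le T$ and $v \in \{\pm u_j : u_j \in \calB_i\}$.
\end{itemize}
The second kind covers the blocks created at stage $i$ that were not chosen, and, for $i = T$, also the final chosen block. Any block at $p_{i-1}$ was emptied by the split at stage $i$, so no other locations occur.

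\textbf{Step 2: choose the center and expand distances.} Take
\[ f = a\,u_1 + b\,(w_1 + \cdots + w_T), \]
with constants $a, b$ to be tuned. By orthogonality, $\|f + u_1\|^2 = (a+1)^2 + T b^2$. For $q = p_{i-1} + v$, split $f - q$ along $u_1$, along $w_1, \ldots, w_{i-1}$, along the chunk $\calB_i$, and along $w_{i+1}, \ldots, w_T$. This gives
\[ \|f - q\|^2 = (a-1)^2 + (i-1)(b-1)^2 + \|b\,w_i - v\|^2 + (T-i)\,b^2 . \]
The middle term depends on $v$:
\begin{itemize}
\item it equals $(b-1)^2$ if $v = w_i$;
\item it equals $b^2 + 1$ if $v \perp w_i$;
\item it equals $(b+1)^2$ if $v = -w_i$.
\end{itemize}
For $b \ge 0$ the worst case is $(b+1)^2$.

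\textbf{Step 3: tune the constants.} Set $b = \tfrac12$. The worst stage-$i$ distance squared becomes
\[ (a-1)^2 + \tfrac{i-1}{4} + \tfrac94 + \tfrac{T-i}{4} = (a-1)^2 + \tfrac{T}{4} + 2 , \]
which is uniform in $i$. Now set $a = \tfrac12$. The distance squared to $-u_1$ is $\tfrac94 + \tfrac T4$, and every other distance squared is at most $\tfrac14 + \tfrac T4 + 2 = \tfrac{T+9}{4}$. Hence
\[ \min_f \cost(f, x^{(T)}) \le \cost(f, x^{(T)}) \le \tfrac12\sqrt{T+9} , \]
which is $\tfrac12\sqrt{T + O(1)}$.

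\textbf{Main obstacle.} The only step needing care is Step 1: making sure no occupied location is missed. In particular, the sibling blocks at every earlier stage remain in $x^{(T)}$, and these are exactly what force the $(b+1)^2$ worst case. The matching lower bound $\tfrac12\|p_T + u_1\| = \tfrac12\sqrt{4+T}$ on the optimal radius shows this center is essentially optimal.
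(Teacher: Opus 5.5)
Your proposal is correct and is essentially the paper's argument. Your center with $a=b=\frac{1}{2}$ is exactly the paper's $\frac{1}{2}p_T$, and your orthogonal decomposition across the chunks is the same computation. You carry it out with explicit constants, which gives $\frac{1}{2}\sqrt{T+9}$, and you check the unsplit point $-u_1$ explicitly, where the paper's sketch folds every case into an unspecified $O(1)$.
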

\begin{proof}[Proof Sketch]
We claim that $\max_{q\in x^{(T)}}\|q - \frac 12 p\| \le \frac 12 \sqrt{T + O(1)}$, which implies the claim.

We start by computing $\|p_i - \frac 12 p\|^2$.
By design (using the orthogonality of all directions), it follows that $\|p_i - \frac 12 p\|^2 = \frac {T+1}{4}$: in each of the $T + 1$ directions added to $p$, the coefficient is $\pm \frac 12$.

Now, suppose that an agent location $q$ comes from the $i$'th stage.
For each $1\le i\le T$, let
$u_i^\star=p_i-p_{i-1}$ denote the direction selected at stage $i$. Then, $q - p_i = v - u_i$ where either $v = -u_i$ or $v$ is orthogonal to all $u_j$.
In either scenario, $\|q - \frac 12 p\|^2 = \|p_i - \frac 12 p\|^2 + O(1)$ since at most two orthogonal directions are changed by $O(1)$.
\end{proof}

\paragraph{Putting it all together.} 
We will take $s = T = \sqrt{d - 1}$. 
Then, we have that
\[\alpha(\calM) \ge 2\frac{\sqrt{T  - 1}}{\sqrt{T + O(1)}} = 2 - O\left(\frac 1T\right) = 2 - O\left(\frac 1{\sqrt d}\right)\]
by a first-order Taylor approximation of $x \mapsto \sqrt{x}$.

\paragraph{Improving to exponentially small dependence on $d$.}

With some cleverness in reusing orthonormal directions within the $T$ stages, it is possible to improve the dependence to $2 - O\left(\frac 1d\right)$ (in effect, we only require $d\ge T + s$); however, the barrier remains that $\calO_{\calM}(p;x^{(i)})^2$ only grows linearly in the number of stages.
As noted in a footnote, this is essentially optimal for~\pref{eq:pointwise-increase} even as $s\rightarrow \infty$, so we cannot hope to beat linear growth.

To reach our improved bound, we first modify the perturbation to not be of unit norm.
This allows us to instead get \emph{multiplicative} growth in $\calO_{\calM}$.

By virtue of our new perturbation, we will also show that all of $x^{(T)}$ is ``close'' to the final point $p_T$, which allows us to use the ending part of the construction of~\cite[Theorem 3.1]{Gomes:2026aa}.

\begin{proposition}[Bad point to large approximation ratio]
\label{prop:large-apx-ratio}
    Let $x$ be a location profile and $p$ a point with at least $2d\left(\frac {3}{\sqrt{\eps}}\right)^{d - 1}$ agents on it such that $\calO_{\calM}(p; x)\ge D$ and all of $x$ lies within $B(p, R)$.
Then, $\alpha(\calM) \ge 1 + \frac DR - \eps$.
\end{proposition}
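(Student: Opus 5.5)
The plan is to follow the ending of the construction in \cite[Theorem 3.1]{Gomes:2026aa}. I would take the agents co-located at $p$ and spread them over a sphere of radius $R$ around $p$, in a fine net of directions. Group deviation (\pref{fact:group-deviation}) keeps the mechanism's expected distance to $p$ at least $D$. The net guarantees that, whatever facility $F$ is drawn, some agent sits almost diametrically opposite $F$ across $p$, and hence at distance about $\|F-p\|+R$ from $F$. Meanwhile the optimal cost stays at most $R$.

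\textbf{Step 1 (net).} Set $\delta=\sqrt{\eps}$. Fix a $\delta$-net $W$ of the unit sphere $S^{d-1}$: for every unit vector $u$ there is some $w\in W$ with $\|u-w\|\le\delta$, equivalently $\langle u,w\rangle\ge 1-\delta^2/2$. By the standard volumetric bound, $W$ can be taken of size at most $2d(3/\delta)^{d-1}=2d(3/\sqrt\eps)^{d-1}$; the precise constant is only bookkeeping. Since at least $|W|$ agents sit at $p$, I move one agent to each point $p+Rw$ with $w\in W$ and leave any remaining agents at $p$. Call the new profile $x'$.

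\textbf{Step 2 (optimum of $x'$).} The agents untouched in $x$ lie in $B(p,R)$, and the new agents lie on $\partial B(p,R)$. Hence $\min_f\cost(f,x')\le \cost(p,x')\le R$.

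\textbf{Step 3 (mechanism cost of $x'$).} Only agents that were co-located at $p$ moved, so \pref{fact:group-deviation} gives $\E_{F\sim\calM(x')}\|F-p\|\ge \calO_{\calM}(p;x)\ge D$. Now fix a realization $F$ and write $a=\|F-p\|$. Choose $w\in W$ within $\delta$ of $-(F-p)/a$; if $a=0$, any $w$ works. Then
\[
\|F-p-Rw\|^2 \ge a^2+R^2+2aR(1-\eps/2) = (a+R)^2-\eps aR \ge (a+R)^2\Bigl(1-\frac{\eps}{4}\Bigr),
\]
using $aR\le (a+R)^2/4$. So $\cost(F,x')\ge (a+R)(1-\eps/4)$. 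Taking expectations,
\[
\cost(\calM,x')\ge (D+R)\Bigl(1-\frac{\eps}{4}\Bigr).
\]
Combining with Step 2 yields $\alpha(\calM)\ge (1+D/R)(1-\eps/4)$.

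\textbf{Step 4 (cleanup).} If $D/R\le 3$, then $(1+D/R)(1-\eps/4)\ge 1+D/R-\eps$. If $D/R>3$, the same quantity is still at least $1+D/R-\eps$ whenever $\eps(1+D/R)/4\le\eps$, which fails. So this case needs a separate argument. One option is to put the net at a radius $t$ chosen so that $t\le R$ still holds and use a crude bound. A simpler option is to note that $D\le \cost(\calM,x)$ and $\min_f\cost(f,x)\le R$ already give $\alpha\ge D/R$, which is large but not quite $1+D/R-\eps$. The cleanest fix is to rescale $\eps$ by a constant at the start: the net size bound has slack to absorb this for bounded $D/R$. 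Alternatively, observe that the statement is only used with $D\le 2R$.

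I expect the main obstacle to be the net counting in Step 1. Matching the exact count $2d(3/\sqrt\eps)^{d-1}$, with exponent $d-1$ rather than $d$, requires a careful sphere-covering bound, for example covering by caps or by faces of a cube, with the factor $2d$ coming from the faces. The geometric inequality in Step 3 is straightforward.
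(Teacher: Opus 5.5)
You follow the same construction and geometry as the paper: spread the agents at $p$ over a $\sqrt{\eps}R$-net of $\partial B(p,R)$, use group deviation to keep the expected distance to $p$ at least $D$, and pick the net point nearly antipodal to $F$. This gives $\cost(F,x')^2\ge (a+R)^2-\eps aR$ with $a=\|F-p\|$.

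The gap comes right after this. Using $aR\le (a+R)^2/4$ turns the error into one proportional to $a+R$. You then get only $(1+D/R)(1-\eps/4)$, which falls short of $1+D/R-\eps$ whenever $D/R>3$, and the statement places no bound on $D/R$. None of your repairs closes this:
\begin{itemize}
\item The lost factor $(1+D/R)/4$ is unbounded, so rescaling $\eps$ by a constant cannot absorb it.
\item The bound $\alpha\ge D/R$ is weaker than the target for every $\eps<1$.
\item The application does only use $D\le R$, but appealing to that proves a weaker proposition than the one stated.
\end{itemize}

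The missing step is to take the square root directly, keeping the error additive. For $y>0$ and $0\le c\le y^2$ one has $\sqrt{y^2-c}\ge y-c/y$. Take $y=a+R$ and $c=\eps aR$; the condition $c\le y^2$ holds for $\eps\le 4$. Then
\[
\cost(F,x')\ \ge\ a+R-\frac{\eps aR}{a+R}\ \ge\ a+R-\eps R,
\]
since $a/(a+R)\le 1$. The error is now additive in $R$ and uniform in $a$. Taking expectations and dividing by $\min_f\cost(f,x')\le R$ gives exactly $1+D/R-\eps$ for every value of $D$; this is what the paper does.

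The net count you flag as the main obstacle is not one, and the paper simply asserts it. A maximal $\delta$-separated subset of $S^{d-1}$ has pairwise disjoint $\delta/2$-balls inside the shell $\{1-\delta/2\le\|y\|\le 1+\delta/2\}$. The shell's volume is at most $d\delta(1+\delta/2)^{d-1}$ times that of the unit ball. This gives at most $2d(1+2/\delta)^{d-1}\le 2d(3/\delta)^{d-1}$ points for $\delta\le 1$. For $\eps\ge 1$ the proposition is trivial from $\alpha\ge D/R$.
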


\begin{proof}
    Take all agents currently at $p$ and spread them on an $R\sqrt{\eps}$-net of $\partial B(p, R)$, creating a new profile $x'$ with radius $R$.
Consider an arbitrary point $f\in \R^d$. We claim that \[\max_{q\in x'}\|f - q\|_2^2 \ge (\|f - p\| + R)^2 - \|f - p\|R\eps.\]
By rescaling and translating, we may assume that $p = 0$ and $R = 1$.
Now, consider the unit antipode $\overline f$ of $f$, as projected onto $\partial B(0, 1)$.
    There is some $q$ in our net such that $\|q - \overline f\|\le \sqrt{\eps}$, so $\langle q,\overline f\rangle \ge 1 - \frac{\eps}{2}$ and $\langle q, f\rangle \le -\|f\|(1 - \frac{\eps}{2})$.

    Then, it follows that 
    \[\|f - q\|^2 = \|f\|^2 + \|q\|^2 - 2\langle f, q\rangle \ge \|f\|^2 + 1 + 2\|f\| - \|f\|\eps = (\|f\| + 1)^2 - \|f\|\eps.\]
    Scaling back and taking a square root, we may then write
    \[\|f - q\|_2 \ge \sqrt{(\|f - p\| + R)^2 - \|f - p\|R\eps} \ge \|f - p\| + R - \frac{\|f - p\|R\eps}{\|f - p\| + R} \ge \|f - p\| + R(1 - \eps)\]
    by using $\sqrt{x - \eps} \ge \sqrt{x} - \frac{\eps}{\sqrt x}$.
    Thus, we have that
    \[\alpha(\calM) \ge \frac{\E_{F\sim \calM(x')}[\max_{q\in x'} \|F - q\|_2]}{R}\ge \frac{\E_{F\sim \calM(x')}[\|F - p\|_2] + R(1 - \eps)}{R} \ge 1 + \frac DR - \eps\]
    as desired.
\end{proof}
\section{Main Result}

In this section, we prove the main result.

\restatetheorem{thm:main-theorem}

The construction is similar to~\pref{alg:agent-profile-construction}. {\bf Throughout, we take $\lambda = 0.9$.} 
\begin{algorithmSELF}[Modified construction algorithm]
\label{alg:new-profile-construction}
Everything remains unchanged, except the line marked $\spadesuit$, which becomes \\

\noindent
 Let $a^\ast = \lambda + \frac {2}{s}$. 
{\textbf{Split}} the agents at $p$ in $x^{(i-1)}$ into $2s$ blocks of equal size, and move them to $\left\{\frac{\lambda}{a^\ast}p \pm \frac{\sqrt{1-\lambda^2}}{(a^\ast)^i}u_j: u_j \in \calB_i\right\}$ to form $x^{(i)}$.
\end{algorithmSELF}

In light of~\pref{prop:large-apx-ratio}, we show the following result, where we will drop all floors and ceilings (in particular, we will assume $n$ is divisible by $2(2s)^T$).

\begin{lemma}[Constructing a bad point]
\label{lem:construct-bad-point}
    Run~\pref{alg:new-profile-construction} and assume that $T\le \frac{s}{25}$ and $s\ge 25$.
At the end of the algorithm, the point $p$ has at least $\frac{n}{2(2s)^T}$ agents and all of $x^{(T)}$ lies within $B(p, R)$ where $R = \calO_{\calM}(p;x^{(T)})\cdot \left(1 + e^{-\Theta(T)}\right)$.
\end{lemma}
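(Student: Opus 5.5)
The plan is to track everything in normalized coordinates, establish by induction a multiplicative lower bound on $D_i \coloneqq \calO_{\calM}(p_i;x^{(i)})$, and then compare it with an explicit upper bound on $\max_{q\in x^{(T)}}\|q-p_T\|$. Here $p_0=u_1,p_1,\dots,p_T$ is the sequence of chosen points and $c_i \coloneqq \sqrt{1-\lambda^2}/(a^\ast)^i$.

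The agent count is immediate: each stage splits the block at $p$ into $2s$ equal blocks, starting from $n/2$. This gives $\frac{n}{2(2s)^T}$ agents at $p_T$.

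\textbf{Geometry of the profile.} Write $q_i \coloneqq (a^\ast)^i p_i$. The update reads $q_i=\lambda q_{i-1}\pm\sqrt{1-\lambda^2}\,u$ with $u\in\calB_i$ orthogonal to everything used before. Hence $\|q_i\|=1$ and $\|p_i\|=(a^\ast)^{-i}$. Since $s\ge 25$ gives $a^\ast\le 0.98<1$, the norms grow geometrically.

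Every agent of $x^{(T)}$ is of one of the following kinds:
\begin{itemize}
\item at $-u_1$;
\item at $p_T$;
\item at a point $q=\frac{\lambda}{a^\ast}p_{i-1}+c_i v$ created at stage $i$ but not selected, where $v\in\overline\calB_i\setminus\{u_i^\star\}$.
\end{itemize}
Using orthogonality, $p_T=(\lambda/a^\ast)^{T-i}p_i+w$ with $w\perp p_i,v$. From this I compute $\|p_T-q\|^2$ exactly. The dominant contributions come from $i=T$ (siblings of $p_T$) and from $i$ close to $T$. Contributions from level $i$ are governed by $\|p_i\|^2/\|p_T\|^2=(a^\ast)^{2(T-i)}$, and $-u_1$ contributes only $O((a^\ast)^{T})$ relative to $\|p_T\|$. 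So I expect
\[
\max_{q}\|q-p_T\|\le \|p_T\|\bigl(\beta+O(\rho^{T})\bigr)
\]
for an explicit constant $\beta$ and some $\rho<1$ depending on $\lambda$. The point of the closed form $\|q_i\|=1$ is that all these distances become explicit functions of $\lambda$, $a^\ast$ and $T-i$.

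\textbf{Lower bound on $D_i$.} As in the overview, I will average over the $2s$ new blocks and use \pref{eq:pointwise-increase}, rescaled by $c_i$. For any $F$, writing $m_{i-1}=\frac{\lambda}{a^\ast}p_{i-1}$,
\[
\max_{v}\calO_{\calM}(m_{i-1}+c_iv;x^{(i)})\ge \E_{F}\sqrt{\|F-m_{i-1}\|^2+c_i^2(1-\tfrac2s)}.
\]
I then need to pass from $\|F-m_{i-1}\|$ back to $\|F-p_{i-1}\|$, where \pref{fact:group-deviation} gives $\E\|F-p_{i-1}\|\ge D_{i-1}$. The natural tool is the triangle inequality along the segment from $p_{i-1}$ toward the origin, combined with Jensen. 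This also explains the choice $a^\ast=\lambda+2/s$: it is designed so that the loss $(1-\lambda/a^\ast)$ and the $-2/s$ slack cancel up to constants.

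The invariant I would aim for is $D_i\ge \gamma\,\|p_i\|\,(1-\rho^{i})$ for the same limiting constant $\gamma$ that appears in the radius bound. The recursion for the ratio $D_i/\|p_i\|$ should be a contraction toward its fixed point, with rate bounded away from $1$ once $T\le s/25$; the accumulated $O(T/s)$ error from the $2/s$ terms is then absorbed. Concretely, I would set $r_i=D_i/\|p_i\|$ and derive $r_i\ge \sqrt{(\lambda r_{i-1}-\ldots)^2+(1-\lambda^2)(1-2/s)}$ or similar. The fixed point of $r\mapsto\sqrt{\lambda^2r^2+(1-\lambda^2)}$ is $1$, and convergence happens at rate $\lambda^2$ per step.

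\textbf{Main obstacle.} The crux is matching constants. The lower-bound fixed point $\gamma$ must coincide with the upper-bound constant $\beta$ for the radius, up to $e^{-\Theta(T)}$; only then does $R/D_T=1+e^{-\Theta(T)}$ follow. The siblings of $p_T$ at distance about $\sqrt2\,c_T$, and the stage-$(T-1)$ points, are the tightest cases. I would first verify the inequality on these, then check that levels $i\le T-2$ are dominated thanks to the factor $(a^\ast)^{2(T-i)}$ and the positive inner products. In those inner products, $\lambda^2>1-\lambda^2$ is exactly where $\lambda=0.9$ is used.
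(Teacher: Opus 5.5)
\textbf{The lower bound on $D_i=\calO_{\calM}(p_i;x^{(i)})$ is the gap.} Your route is the overview inequality rescaled by $c_i$, then $\|F-m_{i-1}\|\ge\|F-p_{i-1}\|-\|p_{i-1}-m_{i-1}\|$, Jensen, and \pref{fact:group-deviation}. Using $\|p_{i-1}\|/\|p_i\|=a^\ast$ and $\|p_{i-1}-m_{i-1}\|/\|p_i\|=a^\ast-\lambda=2/s$, it gives
\[
r_i^2\ \ge\ \Bigl(a^\ast r_{i-1}-\tfrac{2}{s}\Bigr)^2+(1-\lambda^2)\Bigl(1-\tfrac{2}{s}\Bigr).
\]
At $r_{i-1}=1$ the right side equals $1-2(1-\lambda^2)/s<1$. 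So the recursion's fixed point is $1-\Theta(1/s)$, and you can only certify $D_T\ge(1-\Theta(1/s))\|p_T\|$.

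But the profile genuinely extends beyond distance $\|p_T\|$ from $p_T$: the agents at $-u_1$ satisfy $\|p_T+u_1\|^2=\|p_T\|^2+1+2(\lambda/a^\ast)^T$. You would therefore end with $R/D_T=1+\Theta(1/s)$. That is not $1+e^{-\Theta(T)}$ once $T$ may be as large as $s/25$, and with the theorem's parameters it yields only $2-O(1/\sqrt d)$. Nothing cancels here: $a^\ast$ drops out of the value at $r_{i-1}=1$ entirely, so the $O(1/s)$ error cannot be ``absorbed.''

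The loss comes from measuring everything through $\|F-m_{i-1}\|$. Take $F$ on the line through $p_{i-1}$ and the origin, which contains every point where your triangle inequality is tight. There $F-m_{i-1}\perp\overline{\calB}_i$, so averaging over $v$ costs nothing. Yet any bound depending only on $\|F-m_{i-1}\|$ must still pay $\Theta(\|p_i\|^2/s)$ there.

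The missing idea is \pref{lem:far-directions}. Bound the averaging loss by $\frac{2}{s}\sum_{v\in\calB_i}\langle F,v\rangle^2\le\frac{2}{s}\|F\|^2$, charging it to the distance from the origin rather than from $m_{i-1}$. This gives $\bigl(\mathbb{E}_v\|F-m_{i-1}-c_iv\|\bigr)^2\ge\|F-m_{i-1}\|^2+c_i^2-\frac{2}{s}\|F\|^2$. Now write $p_{i-1}=a^\ast Du$ with $D=\|p_i\|$ and $\|u\|=1$, so that $m_{i-1}=\lambda Du$ and $c_i^2=D^2(1-\lambda^2)$. Expanding gives
\[
\|F-m_{i-1}\|^2+c_i^2-\tfrac{2}{s}\|F\|^2\ \ge\ \tfrac{\lambda}{a^\ast}\|F-p_{i-1}\|^2+D^2(1-\lambda a^\ast),
\]
which holds because $\lambda/a^\ast\le 1-2/s$. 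This is the actual role of $a^\ast=\lambda+2/s$. Jensen, \pref{fact:group-deviation}, and $D_{i-1}\ge\|p_{i-1}\|=a^\ast D$ then give exactly $D_i\ge\|p_i\|$. The invariant is lossless, with no limiting constant $\gamma$ and no contraction argument.

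\textbf{Your radius analysis also points the wrong way.} The right constant is $\beta=1$, but the extremal points are not the siblings of $p_T$. A point created at stage $t$ satisfies $\|q-p_T\|^2/\|p_T\|^2=1+(a^\ast)^{2k}-\kappa(\lambda a^\ast)^k$ with $k=T-t$ and $\kappa\in\{2\lambda^2,\,4\lambda^2-2\}$. So the siblings of $p_T$ sit at about $0.62\|p_T\|$ and $0.87\|p_T\|$.

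Early-level points are not ``dominated'' by $(a^\ast)^{2(T-i)}$. They lie near the origin relative to $p_T$, so their distance to $p_T$ tends to $\|p_T\|$. Keeping them inside $B(p_T,\|p_T\|)$ requires $(a^\ast/\lambda)^k\le\kappa$ for all $k<T$, and that is where $T\le s/25$ enters. It also needs $4\lambda^2-2>1$, i.e.\ $\lambda^2>3/4$, not merely $\lambda^2>1-\lambda^2$; this is already forced by the antipodal sibling of $p_T$ at distance $2\sqrt{1-\lambda^2}\,\|p_T\|$. Only $-u_1$ lies beyond $\|p_T\|$, and it is within $(1+(a^\ast)^T)\|p_T\|$.
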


Before proving the lemma, we show it implies~\pref{thm:main-theorem}.

\begin{proof}[Proof of Theorem~\ref{thm:main-theorem}]
    Choose $s = 5\sqrt{d - 1}$ and $T = \frac 1{5} \sqrt{d - 1}$ in~\pref{alg:new-profile-construction} and apply~\pref{lem:construct-bad-point}.

    Then, we are guaranteed that $\frac DR \ge 1 - e^{-\Theta(T)} = 1 - e^{-\Theta(\sqrt d)}$, and that there are at least $\frac{n}{2(10\sqrt{d-1})^{\frac 1{5}\sqrt{d-1}}}$ agents at $p$.
    Now, applying~\pref{prop:large-apx-ratio}, the net parameter $\eps$ must satisfy
    \[2d\left(\frac 3{\sqrt{\eps}}\right)^{d-1} \le \frac{n}{2(10\sqrt{d-1})^{\frac 1{5}\sqrt{d-1}}} \implies \frac 3{\sqrt{\eps}} \le \Theta\left(n^{1/(d-1)}\right) \implies \eps \ge \Theta\left(n^{-2/(d-1)}\right).\]
    Setting such a value of $\epsilon$ and taking the lower bound on $D/R$ completes the proof.
\end{proof}

To prove~\pref{lem:construct-bad-point},  we will require the following structural result.

\begin{lemma}[Existence of far directions]
\torestate{
\label{lem:far-directions}
    Suppose that $S\subseteq \bbS^{d-1}$ is an orthonormal set with $|S| = s \ge 25$.
Then, for every $D\in \R$, $z\in \R^d$ and $u\perp S$ with $\|u\|=1$,
    \[\operatornamewithlimits{\E}\limits_{v\in (S)\cup (-S)} [\|z - D(\lambda u + \sqrt{1-\lambda^2} v)\|_2] \ge \sqrt{D^2(1 - \lambda a^\ast) + \frac{\lambda}{a^\ast}\|z - Da^\ast u\|^2}.\]
}
\end{lemma}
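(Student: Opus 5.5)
The plan is to expand the squared distance, average over each antipodal pair $\pm v$ to cancel the term linear in $v$, and then show that the leftover second-order loss is covered by the small slack $1-\lambda/a^\ast = \tfrac{2}{s a^\ast}$ in the target. Throughout write $\mu=\sqrt{1-\lambda^2}$, and assume WLOG $D\ge 0$ (replacing $(D,u)$ by $(-D,-u)$ leaves both sides unchanged).

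\textbf{Step 1: expansion and the deficit.} For $v\in S\cup(-S)$ set
\[
A=\|z\|^2-2D\lambda\langle z,u\rangle+D^2,\qquad t_v=2D\mu\langle z,v\rangle .
\]
Since $u\perp v$ and $\|\lambda u+\mu v\|=1$, we have $\|z-D(\lambda u+\mu v)\|^2=A-t_v$. Expanding the right-hand side of the lemma gives
\[
D^2(1-\lambda a^\ast)+\tfrac{\lambda}{a^\ast}\|z-Da^\ast u\|^2=D^2+\tfrac{\lambda}{a^\ast}\|z\|^2-2D\lambda\langle z,u\rangle = A-\delta,
\]
where $\delta=\tfrac{2}{sa^\ast}\|z\|^2$. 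So the goal is $\E_v\sqrt{A-t_v}\ge\sqrt{A-\delta}$.

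Two structural facts will be used. Let $z_S$ be the projection of $z$ onto $\mathrm{span}(S)$. Because $S$ is orthonormal, $\sum_{v\in S}t_v^2=4D^2\mu^2\|z_S\|^2$. Because $u\perp S$, we have $A=\|z-D\lambda u\|^2+D^2\mu^2\ge \|z_S\|^2+D^2\mu^2$. By AM--GM this gives $|t_v|\le 2D\mu\|z_S\|\le A$, so every $x_v:=t_v/A$ lies in $[-1,1]$.

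\textbf{Step 2: pair averaging.} Pairing $v$ with $-v$ gives
\[
\E_v\sqrt{A-t_v}=\sqrt A\;\E_{v\in S}\tfrac12\bigl(\sqrt{1-x_v}+\sqrt{1+x_v}\bigr).
\]
I will use a scalar inequality $\tfrac12(\sqrt{1-x}+\sqrt{1+x})\ge 1-c(x)x^2$ on $[-1,1]$. Here $c(x)$ is close to $1/8$ for small $|x|$, which is the Taylor coefficient, and is at most about $0.3$ on all of $[-1,1]$, which is what the endpoint $x=1$ forces. This yields
\[
\E_v\sqrt{A-t_v}\ge\sqrt A-\varepsilon,\qquad \varepsilon=\frac{4c\,D^2\mu^2\|z_S\|^2}{s\,A^{3/2}} .
\]
Since $\sqrt A-\varepsilon\ge\sqrt{A-2\varepsilon\sqrt A}$, it then suffices to prove $2\varepsilon\sqrt A\le\delta$. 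After substitution this reads
\[
4c\,a^\ast\mu^2D^2\|z_S\|^2\le A\|z\|^2 .
\]

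\textbf{Step 3: closing the inequality (the main obstacle).} Using $\|z\|^2\ge\|z_S\|^2$ and $A\ge D^2\mu^2+\|z_S\|^2$, it suffices that $4c\,a^\ast\mu^2D^2\le D^2\mu^2+\|z_S\|^2$. With $a^\ast\le 0.9+2/25=0.98$, the uniform constant $c\approx0.3$ gives $4ca^\ast\approx1.18>1$, so the crude bound does not suffice. I will therefore split into two cases.

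\emph{Case $\|z_S\|^2\ge 0.2\,D^2\mu^2$.} The slack term $\|z_S\|^2$ absorbs the deficit even with $c=0.3$.

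\emph{Case $\|z_S\|^2< 0.2\,D^2\mu^2$.} Here $|x_v|\le 2D\mu\|z_S\|/(D^2\mu^2+\|z_S\|^2)$ is bounded away from $1$, so $c(x_v)$ is close to $1/8$. Then $4c\,a^\ast\le 1$ with room to spare.

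The remaining work is to pin down an explicit $c(x)$ (for instance via the series $\tfrac12(\sqrt{1-x}+\sqrt{1+x})=1-\sum_k \alpha_k x^{2k}$ with $\alpha_k\ge0$) and to choose the case threshold so that both cases close for $s\ge 25$. Note that the hypothesis $s\ge25$ enters only through the bound $a^\ast\le 0.98$; the $1/s$ factors cancel exactly between $\varepsilon$ and $\delta$.
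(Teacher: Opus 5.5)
Your proof is correct, and the constants you left open do close at your threshold $0.2$. With $g(x)=\frac12(\sqrt{1-x}+\sqrt{1+x})$, the function $c(x)=(1-g(x))/x^2$ is increasing in $|x|$ by the nonnegative series, with $c(1)=1-1/\sqrt2$.
\begin{itemize}
\item Case 1 needs $4(1-1/\sqrt2)\,a^\ast\approx 1.15\le 1.2$.
\item In Case 2 you have $|x_v|\le 2\sqrt{0.2}/1.2\approx 0.745$. Hence $c\le c(0.745)\approx 0.157$, and $4ca^\ast\approx 0.62\le 1$.
\item The step $\sqrt A-\varepsilon\ge\sqrt{A-2\varepsilon\sqrt A}$ also needs $\varepsilon\le\sqrt A$. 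This holds because $c(x)x^2=1-g(x)\le 1-1/\sqrt2$.
\end{itemize}

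Your skeleton is the paper's: expand, pair $v$ with $-v$, sum using $\sum_{v\in S}\langle z,v\rangle^2=\|z_S\|^2$, linearize, and compare with the slack $\frac{2}{sa^\ast}\|z\|^2$. The per-pair estimate, however, is different.

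The paper takes $D=1$, $B=\|z-\lambda u\|^2+\mu^2$ (your $A$) and $\alpha_v=\langle z,v\rangle$. It proves by squaring that $\frac12\bigl(\sqrt{B-2\mu\alpha_v}+\sqrt{B+2\mu\alpha_v}\bigr)\ge\sqrt{B-\alpha_v^2}$, which uses only $\alpha_v^2\le\|z-\lambda u\|^2$. Then $\sqrt{B-\alpha_v^2}\ge\sqrt B-\alpha_v^2/\sqrt B$ gives a squared deficit of at most $\frac2s\|z_S\|^2\le\frac2s\|z\|^2$, uniformly. That fits the budget as soon as $a^\ast\le 1$, with no cases and no numerics.

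Your bound $g(x)\ge 1-c(x)x^2$ is an identity for the exact $c(x)$, so all of your loss comes from replacing $c(x)$ by a single constant. That replacement forgets that $|x_v|$ close to $1$ forces $|\langle z,v\rangle|\approx D\mu$. In that regime the slack term $\|z_S\|^2$ in your Step 3 is comparable to $D^2\mu^2$. This is exactly why the uniform constant gives $4ca^\ast>1$ and you need the split.

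What each approach buys:
\begin{itemize}
\item Your scalar estimate is pointwise sharper. With the exact $c(x_v)$, your per-pair loss is at most $(2-\sqrt2)\,\langle z,v\rangle^2/\sqrt A$, against the paper's $\langle z,v\rangle^2/\sqrt A$. The price is the case analysis and explicit numerical checks.
\item The paper's bound is coarser but has the coupling between $\alpha_v$ and $B$ built in. So a single uniform estimate suffices, and the proof is shorter.
\end{itemize}
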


We leave the proof of this to~\pref{app:appendix-miscellaneous}.

\begin{proof}[Proof of Lemma~\ref{lem:construct-bad-point}]
    We will first prove that $\calO_{\calM}(p_i;x^{(i)})\ge \|p_i\|$ by induction.
    Note that this is satisfied at $i = 0$, since $\twonorm{p_0} = 1$ and $\calO_{\calM}(p_0;x^{(0)}) \geq 1$.

    As before, let $\overline \calB_{i+1}$ be the symmetrization of $\calB_{i+1}$, and compute $D$ such that $p_i = Da^\ast\overline p$ with $\|\overline p\| = 1$.

    Now, we see that (similar to the proof overview)
    \begin{align*}
        \max_{v\in \overline \calB_{i+1}}\Biggl[&\EE_{F\sim \calM(x^{(i+1)})}\left[\left\|F - D\left(\lambda \overline p + \sqrt{1 - \lambda^2}v\right)\right\|\right]\Biggr]\\
        &\!\!\!\!\!\!\!\!\overset{\text{(\pref{lem:far-directions})}}{\ge} \EE_{F\sim \calM(x^{(i+1)})}\left[\sqrt{D^2(1 - \lambda a^\ast) + \frac{\lambda}{a^\ast}\|F - p_i\|^2}\right]\\
        &\ge \sqrt{D^2(1 - \lambda a^\ast) + \frac{\lambda}{a^\ast}\EE_{F\sim \calM(x^{(i+1)})}[\|F - p_i\|]^2}\\
        &\ge \sqrt{D^2(1 - \lambda a^\ast) + \frac{\lambda}{a^\ast}\EE_{F\sim \calM(x^{(i)})}[\|F - p_i\|]^2}\\
        &\ge \sqrt{D^2(1 - \lambda a^\ast) + D^2\lambda a^\ast}\\
        &= D
    \end{align*}
    by induction, Jensen's inequality, and~\pref{fact:group-deviation}; thus some $v$ achieves this bound.
Since $p_i \perp v$, we know that $\|p_{i+1}\| = D\left\|\lambda \overline p + \sqrt{1 - \lambda^2}v\right\| = D$, which concludes the induction.

Our final goal is to show that $\max_{q\in x^{(T)}} \frac{\|q - p_T\|}{\|p_T\|} \le 1 + e^{-\Theta(T)}$: this will imply that all of $x^{(T)}$ lies within $B(p_T, \|p_T\|(1 + e^{-\Theta(T)}))$ as desired.

To do so, it is useful to characterize what $p_t$ looks like.
In particular, suppose that the orthogonal directions chosen at every stage are $v_0 = e_1, v_1, v_2, \ldots, v_T$. 
Then, we can write
\[p_t = \frac 1{(a^\ast)^t}\left(\lambda^t v_0 + \sqrt{1-\lambda^2}\sum_{i=1}^{t} \lambda^{t-i} v_i\right) \]
and note that $\|p_t\| = (a^\ast)^{-t}$.

We will use the following claim, which we prove in~\pref{app:appendix-miscellaneous}.
\begin{claim}
\torestate{
\label{clm:distance-to-final-point}
    The following equality holds for all $0\le t\le T$.
    \[\frac{\|p_t - p_T\|_2^2}{\|p_T\|^2} = 1 + (a^\ast)^{2(T - t)} - 2(\lambda a^\ast)^{T - t}.\]
}
\end{claim}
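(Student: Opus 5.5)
Write $a=a^\ast$ throughout. The plan is a direct computation from the explicit formula
\[
p_t=\frac{1}{a^t}\Bigl(\lambda^t v_0+\sqrt{1-\lambda^2}\sum_{i=1}^t\lambda^{t-i}v_i\Bigr),
\]
where $v_0,\dots,v_T$ are orthonormal, together with the identity $\|p_t\|=a^{-t}$, which holds because $\lambda^{2t}+(1-\lambda^2)\sum_{i=1}^t\lambda^{2(t-i)}=1$. I will expand
\[
\|p_t-p_T\|^2=\|p_t\|^2+\|p_T\|^2-2\langle p_t,p_T\rangle,
\]
so the only real work is computing the inner product $\langle p_t,p_T\rangle$.

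\textbf{Inner product.} I will use orthonormality of the $v_i$. Only the indices $0\le i\le t$ are shared between $p_t$ and $p_T$.
\begin{itemize}
\item The $v_0$ coefficient contributes $\lambda^t\lambda^T$.
\item Each $v_i$ with $1\le i\le t$ contributes $(1-\lambda^2)\lambda^{t-i}\lambda^{T-i}=\lambda^{T-t}\cdot(1-\lambda^2)\lambda^{2(t-i)}$.
\end{itemize}
Factoring out $\lambda^{T-t}$, the remaining sum is exactly the one that gave $\|p_t\|=a^{-t}$, so it equals $1$. Hence
\[
\langle p_t,p_T\rangle=\frac{\lambda^{T-t}}{a^{t}a^{T}}=\lambda^{T-t}\|p_t\|\|p_T\|.
\]
A cleaner way to see the same fact is from the recursion $p_{i+1}=\frac{\lambda}{a}p_i+(\text{vector orthogonal to }v_0,\dots,v_i)$. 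Iterating it gives $p_T=(\lambda/a)^{T-t}p_t+w$ with $w\perp p_t$, and the inner product follows immediately.

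\textbf{Normalizing.} Divide by $\|p_T\|^2=a^{-2T}$:
\[
\frac{\|p_t\|^2}{\|p_T\|^2}=a^{2(T-t)},
\qquad
\frac{2\langle p_t,p_T\rangle}{\|p_T\|^2}=2\lambda^{T-t}a^{T-t}=2(\lambda a)^{T-t}.
\]
Adding the $1$ from $\|p_T\|^2/\|p_T\|^2$ gives the claimed identity
\[
\frac{\|p_t-p_T\|_2^2}{\|p_T\|^2}=1+(a^\ast)^{2(T-t)}-2(\lambda a^\ast)^{T-t}.
\]

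\textbf{Where care is needed.} There is no real obstacle; the work is bookkeeping.
\begin{itemize}
\item The explicit formula for $p_t$ must be checked against Algorithm~\ref{alg:new-profile-construction}. The new point is $\frac{\lambda}{a}p+\frac{\sqrt{1-\lambda^2}}{a^{i}}v_i$, and the induction confirming the closed form is straightforward.
\item Orthonormality of $v_0,\dots,v_T$ is needed. This holds because each chosen direction comes from a distinct chunk $\calB_i$ of the orthonormal basis.
\end{itemize}
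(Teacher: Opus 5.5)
Your proof is correct and is essentially the paper's direct computation, relying on the orthonormality of $v_0,\dots,v_T$ and the geometric-sum identity $\lambda^{2t}+(1-\lambda^2)\sum_{i=1}^t\lambda^{2(t-i)}=1$. The paper expands the squared coefficient differences term by term, whereas you organize the same algebra as $\|p_t\|^2+\|p_T\|^2-2\langle p_t,p_T\rangle$ with $\langle p_t,p_T\rangle=\lambda^{T-t}\|p_t\|\|p_T\|$ (seen most cleanly from $p_T=(\lambda/a^\ast)^{T-t}p_t+w$, $w\perp p_t$), which is tidier but not a different idea.
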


This also lets us control $\|q - p_T\|$ for any $q\in x^{(T)}$.
Indeed, every $q \in x^{(T)} \setminus \{-e_1, p_T \}$ can be written in the form $p_t + \frac{\sqrt{1 - \lambda^2}}{(a^\ast)^t}(u - v_t)$, where $u = -v_t$ or $u$ is orthogonal to all $v_i$ (but still unit).
\begin{itemize}
    \item First, if $u$ is orthogonal to $v_t$, then we have to both edit the $v_t$ component and add back in $u$:
\begin{align}
&\frac{\|q - p_T\|_2^2}{\|p_T\|_2^2} \nonumber\\
& = 1 + (a^\ast)^{2(T - t)} - 2(\lambda a^\ast)^{T - t} - (1 - \lambda^2)((a^\ast)^{T - t} - \lambda^{T-t})^2 + (1 - \lambda^2)\left[\lambda^{2(T - t)} + (a^\ast)^{2(T-t)}\right]\nonumber \\
&= 1 + (a^\ast)^{2(T-t)} - 2(\lambda a^\ast)^{T - t} + 2(1 - \lambda^2)(\lambda a^\ast)^{T - t}\nonumber\\
&= 1 + (a^\ast)^{2(T-t)} - 2\lambda^2 (\lambda a^\ast)^{T - t}\label{eq:type-1}.\end{align}
\item Next, suppose that $u = -v_t$. 
Then, we can correct the equality by flipping the sign of the $v_t$ component: this yields
\begin{align}
&\frac{\|q - p_T\|_2^2}{\|p_T\|_2^2}\nonumber \\
&\quad= 1 + (a^\ast)^{2(T - t)} - 2(\lambda a^\ast)^{T - t} - (1-\lambda^2)((a^\ast)^{T-t} - \lambda^{T - t})^2 + (1-\lambda^2)((a^\ast)^{T-t} + \lambda^{T - t})^2\nonumber \\
&\quad= 1 + (a^\ast)^{2(T - t)} - 2(\lambda a^\ast)^{T - t} + 4(1 - \lambda^2)(\lambda a^\ast)^{T-t}\nonumber \\
&\quad= 1 + (a^\ast)^{2(T - t)} - (4\lambda^2 - 2)(\lambda a^\ast)^{T-t}.\label{eq:type-2}
\end{align}

\item Finally, we see that $\frac{\|p_T + e_1\|}{\|p_T\|} \le 1 + (a^\ast)^T = 1 + e^{-\Theta(T)}$ since $a^\ast < 1$.
\end{itemize}

We will show that both of these first two cases are at most $1$, when $T\le \frac{s}{20}$.
To do so, we wish to understand when $\kappa (\lambda a^\ast)^{T-t} \ge (a^\ast)^{2(T-t)}$.
The case $t=T$ is immediate; hence assume $t<T$.
Taking the $(T - t)$th roots of both sides, this requires that $\kappa^{\frac{1}{T - t}} \ge \frac{a^\ast}{\lambda} = 1 + \frac{2}{\lambda s}$.
Certainly, the left hand side is minimized at $t = 0$. 
Noting that $(1 + x)^{1/T} \ge 1 + \frac{x}{2T}$ when $0\le x\le 1$,\footnote{Raising both sides to the $T$'th power, we have $(1 + \frac x{2T})^T \le e^{x/2} \le 1 + x$ for $0\le x\le 1$, which can be seen by noting that the derivative of $1 + x - e^{x/2}$ is positive.} it follows that $\kappa^{\frac 1{T - t}} \ge 1 + \frac{\kappa - 1}{2T}$, and it suffices to have $T\le \frac{\lambda(\kappa - 1) s}{4}$.
When $\lambda = \frac 9{10}$, we may evaluate~\pref{eq:type-1} and~\pref{eq:type-2} to give $\kappa \ge 1.24$ and therefore $T\le \frac{s}{20}$ is indeed sufficient to make the first two cases at most $1$.
\medskip

Putting it all together, we have shown that $\max_{q\in x^{(T)}} \frac{\|q - p_T\|}{\|p_T\|} \le 1 + (a^\ast)^T = 1 + e^{-\Theta(T)}$, and since additionally $\calO_{\calM}(p_T;x^{(T)}) \ge \|p_T\|$ we have proven the lemma.
\end{proof}
\section{Discussion}

\paragraph{Future Work.}
The glaring open question is: is $2 - o_n(1)$ the correct bound for all $d\ge 2$?
Our result only approaches $2$ as $d\rightarrow \infty$, for sufficiently large $n$ compared to $d$; furthermore, it is only an asymptotic result and does not immediately give an improvement for, say, $d = 2$.

From the upper bound direction, we know that the \emph{Centroid Mechanism} \citep{Tang:2020aa} achieves approximation ratio $2 - \frac 1n$; is this optimal in all dimensions?
Even in the $d\rightarrow\infty$ regime, our bound suffers in the number of agents compared to this mechanism (although we have not attempted to optimize constants for the sake of readability).

From a theoretical standpoint, our adaptive construction may also be useful for lower bounds in other regimes, such as in the two-facility setting recently considered by~\citet{Ma:2026aa}. 

\paragraph{AI Disclosure.}
The main result of this paper arose from an interactive session with GPT 5.6 Sol Extra High and Pro in Codex. The model did not yield a one-shot solution; however the authors gradually fed the model concrete ideas based on preliminary results derived by the authors.
Reflecting on our insights, we wrote the following lightly edited prompt:
\begin{quote}
    \textit{Note that taking perturbations from a collocated block of agents appears to be quite useful. 
    Many of our earlier arguments have tried to show some sort of insensitivity. If the total change after a coordinated manipulation is at least 1, then the change along an average direction is roughly $1/n$. Can we use these ideas to prove a $2 - \eps$ lower bound for $n$ agents in $\mathbb R^n$? }
\end{quote}
The idea was to find a direction such that the distance between the mechanism's facility locations for the initial profile and a perturbed profile is large and to exploit this with repeated manipulations.

Early responses to this prompt did not lead to our desired  unconditional lower bounds. However, with subsequent interaction, it occurred to the authors that taking perturbations along orthogonal components in $\mathbb R^d$ may suffice. Eventually, the model generated a lower bound of $2 - O(\frac{1}{d^{1/2}} + \frac{\log \log n}{\log n})$ (an exposition of which can be seen in the proof overview). The authors read and verified this lower bound and then manually adjusted the construction to have inverse linear dependence on $d$. With the appropriate prompt combined with a write-up of this construction, the model further improved the dependence to be as stated in \Cref{thm:main-theorem}.

The content of this paper was written by the human authors. All the key ideas from the interactions with GPT 5.6 Sol were reorganized and communicated by the authors, and parts of the proof were strengthened. The authors accept full responsibility for the correctness of results stated in this paper.

\clearpage 
\bibliographystyle{apalike} %unsrt
\bibliography{references}

\clearpage
\appendix

\section{Miscellaneous Proofs}
\label{app:appendix-miscellaneous}

\restatelemma{lem:far-directions}

\begin{proof}
By symmetry, it suffices to consider \(D\ge0\). The case \(D=0\) is immediate, and for \(D>0\), rescaling by \(D\) reduces to \(D=1\).
We begin by decomposing $z' = z  - \lambda u = \sum_{v\in S} \alpha_v v + z^{\perp}$.
Then,
\begin{align}&\E \left[\|z' - \sqrt{1 - \lambda^2}v\|_2 \right] \nonumber \\ 
&\quad= \frac 1{|S|} \sum_{v\in S} \frac 12\left(\sqrt{\|z'\|^2 + (1 - \lambda^2) - 2\sqrt{1 - \lambda^2}\alpha_v} + \sqrt{\|z'\|^2 + (1 - \lambda^2) + 2\sqrt{1 - \lambda^2}\alpha_v}\right)\label{eq:sum-of-parts}.\end{align}
We claim that this inner term is at least $\sqrt{\|z'\|^2 + (1 - \lambda^2) - \alpha_v^2}$.

Proving this can be done with some squaring. 
In particular, we can consider the following sequence of implications, starting with the square of~\pref{eq:sum-of-parts}:
\begin{align*}\frac 12(\|z'\|^2 + (1 - \lambda^2)) + \frac 12\sqrt{(\|z'\|^2 + (1 - \lambda^2))^2 - 4(1 - \lambda^2)\alpha_v^2} &\overset{?}{\ge} \|z'\|^2 + (1 - \lambda^2) - \alpha_v^2\\
\sqrt{(\|z'\|^2 + (1 - \lambda^2))^2 - 4(1 - \lambda^2)\alpha_v^2} &\overset{?}{\ge} \|z'\|^2 + (1 - \lambda^2) - 2\alpha_v^2\\
(\|z'\|^2 + (1 - \lambda^2))^2 - 4(\|z'\|^2 + (1 - \lambda^2))\alpha_v^2+4\alpha_v^4 &\overset{?}{\le} (\|z'\|^2 + (1 - \lambda^2))^2 - 4(1 - \lambda^2)\alpha_v^2  \\
4\alpha_v^4 &\overset{?}{\le} 4\|z'\|^2 \alpha_v^2.
\end{align*}
The last statement is true since $\alpha_v^2 \le \|z'\|_2^2$.
Thus, 
\begin{align*}
    \E [\|z' - \sqrt{1 - \lambda^2}v\|_2] &\ge \frac 1{|S|} \sum_{v\in S} \sqrt{\|z'\|^2 + (1 - \lambda^2) - \alpha_v^2}\\
    &\ge \frac 1{|S|} \sum_{v\in S} \left(\sqrt{\|z'\|^2 + (1 - \lambda^2)} - \frac{\alpha_v^2}{\sqrt{\|z'\|^2 + (1 - \lambda^2)}}\right)\\
    &= \sqrt{\|z'\|^2 + (1 - \lambda^2)} - \frac{\sum_{v\in S}\alpha_v^2}{|S|\sqrt{\|z'\|^2 + (1 - \lambda^2)}}\\
    &\ge \sqrt{\|z'\|^2 + (1 - \lambda^2)} - \frac{\|z\|^2}{|S|\sqrt{\|z'\|^2 + (1 - \lambda^2)}}.
\end{align*}
In an intermediate step, we used the fact that $\sqrt{x - \eps} \ge \sqrt{x} - \frac{\eps}{\sqrt{x}}$ when $x, \eps > 0$, and further we used that $u\perp v$ to see that $\sum_{v\in S} \alpha_v^2 \le \|z\|^2$.
By squaring, we see that
\begin{align*}
\E [\|z' - \sqrt{1 - \lambda^2}v\|_2]^2 &\ge \|z'\|^2 + (1 - \lambda^2) - \frac{2}{|S|} \|z\|^2\\
&= 1 + \|z\|^2 - 2\lambda \langle z, u\rangle  - \frac{2}{|S|}\|z\|^2
\end{align*}
On the other hand, our desired bound satisfies
\begin{align*}
1 - \lambda a^\ast + \frac{\lambda}{a^\ast} \|z - a^\ast u\|^2 &= 1 - \lambda \left(\lambda + \frac{2}{|S|}\right) + \frac{\lambda}{\lambda + \frac{2}{|S|}}\|z\|^2 - 2\lambda \langle z, u\rangle + \lambda\left(\lambda + \frac{2}{|S|}\right)\\
&= 1 + \frac{\lambda}{\lambda + \frac{2}{|S|}}\|z\|^2 - 2\lambda \langle z, u\rangle.
\end{align*}
Comparing these two, we need
\[\frac{\frac{2}{|S|}}{\lambda + \frac{2}{|S|}}\|z\|^2 \ge \frac{2}{|S|}\|z\|^2\]
which is true since $\lambda + \frac{2}{|S|} < 1$.
This completes the proof.
\end{proof}

Finally, we prove~\pref{clm:distance-to-final-point}.

\restateclaim{clm:distance-to-final-point}

\begin{proof}
    The proof is some algebra.
By orthonormality of the $v_i$, we can write
\[\frac{\|p_t - p_T\|_2^2}{\|p_T\|_2^2} = (\lambda^t (a^\ast)^{T-t} - \lambda^T)^2 + (1 - \lambda^2) \sum_{i=1}^{t} (\lambda^{t-i} (a^\ast)^{T-t} - \lambda^{T-i})^2 + (1-\lambda^2) \sum_{i=t+1}^{T} \lambda^{2(T-i)}.\]
It is useful to combine the last two sums as follows:
\begin{align*}
    \sum_{i=1}^{t} (\lambda^{t-i} (a^\ast)^{T-t} - \lambda^{T-i})^2 &+ \sum_{i=t+1}^{T} \lambda^{2(T-i)}\\
    &= \lambda^{2T}\sum_{i=1}^{T} \lambda^{-2i} + (a^\ast)^{2(T-t)}\lambda^{2t}\sum_{i=1}^{t}\lambda^{-2i} - 2(a^\ast)^{T-t}\lambda^{T+t}\sum_{i=1}^{t} \lambda^{-2i}.
\end{align*}
We can evaluate 
\[(1-\lambda^2)\sum_{i=1}^{x} \lambda^{-2i} = (1 - \lambda^2)\frac{\lambda^{-2(x+1)}-\lambda^{-2}}{\lambda^{-2} - 1} = \lambda^{-2x} - 1.\]
Thus, it follows that the entire sum is
\begin{align*}&\left[(\lambda^t (a^\ast)^{T-t} - \lambda^T)^2\right] + (1 - \lambda^2) \sum_{i=1}^{t} (\lambda^{t-i} (a^\ast)^{T-t} - \lambda^{T-i})^2 + (1-\lambda^2) \sum_{i=t+1}^{T} \lambda^{2(T-i)} = \\
&= \left[\lambda^{2t} (a^\ast)^{2(T-t)} + \lambda^{2T} - 2\lambda^{T+t} (a^\ast)^{T-t}\right] + 1 - \lambda^{2T} + (a^\ast)^{2(T-t)}(1 - \lambda^{2t}) - 2(a^\ast)^{T - t}(\lambda^{T-t} - \lambda^{T + t})\\
&= 1 + (a^\ast)^{2(T-t)} - 2(\lambda a^\ast)^{T-t}\end{align*}
by cancelling like terms, as desired.
\end{proof}

\end{document}